\spnewtheorem{fact}[theorem]{Fact}{\bfseries}{\itshape}
\spnewtheorem{observation}[theorem]{Observation}{\bfseries}{\itshape}
\spnewtheorem*{myremark}{Remark}{\bfseries}{\itshape}
\newcommand{\Oh}{\mathcal{O}}
\begin{document}
\title{On Maximal Unbordered Factors}
\author{
Alexander Loptev\inst{1}
\and 
Gregory Kucherov\inst{2}
\and 
Tatiana Starikovskaya\inst{3}
}
\institute{
Higher School of Economics, \email{alexander.loptev@gmail.com}
\and
Laboratoire d'Informatique Gaspard Monge, Universit\'e Paris-Est \& CNRS, Marne-la-Vall\'ee, Paris, France, \email{gregory.kucherov@univ-mlv.fr}
\and
University of Bristol, Bristol, United Kingdom, \email{tat.starikovskaya@gmail.com} 
}
\date{\empty}

\maketitle

\begin{abstract}
Given a string $S$ of length $n$, its maximal unbordered factor is the longest factor which does not have a border. In this work we investigate the relationship between $n$ and the length of the maximal unbordered factor of $S$. We prove that for the alphabet of size $\sigma \ge 5$ the expected length of the maximal unbordered factor of a string of length~$n$ is at least $0.99 n$ (for sufficiently large values of $n$). As an application of this result, we propose a new algorithm for computing the maximal unbordered factor of a string. 
\end{abstract}

\section{Introduction}
If a proper prefix of a string is simultaneously its suffix, then it is called a border of the string. Given a string $S$ of length $n$, its maximal unbordered factor is the longest factor which does not have a border. The relationship between $n$ and the length of the maximal unbordered factor of $S$ has been a subject of interest in the literature for a long time, starting from the 1979 paper of Ehrenfeucht and Silberger~\cite{ESproblem}. 

Let $b(S)$ be the length of the maximal unbordered factor of $S$ and $\pi(S)$ be the minimal period of $S$. Ehrenfeucht and Silberger showed that if the minimal period of $S$ is smaller than $\frac{1}{2} n$, then $b(S) = \pi(S)$. Following this, they raised a natural question: How small $b(S)$ must be to guarantee $b(S) = \pi(S)$? Their conjecture was that $b(S)$ must be smaller than $\frac{1}{2} n$. However, this conjecture was proven false two years later by Assous and Pouzet~\cite{CounterExample}. As a counterexample they gave a string 

$$S = a^m b a^{m+1} b a^m b a^{m+2} b a^m b a^{m+1} b a^m$$
of length $n = 7m + 10$. The length of the maximal unbordered factor of this string is $b(S) = 3m + 6 \le \frac{3}{7} n + 2 < \frac{1}{2} n$ (with $ba^{m+1} ba^m ba^{m+2}$ and $a^{m+2} ba^m ba^{m+1} b$ being unbordered), and the minimal period $\pi (S) = 4m + 7 \neq b (S)$. 

The next attempt to answer the question was undertaken by Duval~\cite{Duval}: He improved the bound to $\frac{1}{4} n + \frac{3}{2}$. But the final answer to the question of Ehrefeucht and Silberger was given just recently by Holub and Nowotka~\cite{EhrenfeuchtSilberger-2}. They showed that $b(S) \le \frac{3}{7} n$ implies $b(S) = \pi(S)$, and, as follows from the example of Assous and Pouzet, this bound is tight.

Therefore, when either $b(S)$ or $\pi(S)$ is small, $b(S) = \pi(S)$. Exploiting this fact, one can even compute the maximal unbordered factor itself in linear time. The key idea is that in this case the maximal unbordered factor is an unbordered conjugate of the minimal period of $S$, and both the minimal period and its unbordered conjugate can be found in linear time~\cite{BorderArray,UnborderedConjugate}.

The interesting cases are those where $b(S)$ (and, consequently, $\pi(S)$) is big. Yet, it is generally believed that they are the most common ones. This is supported by experimental resuts shown in Fig.~\ref{fig:max_unbordered_length} that plots the average difference between the length~$n$ of a string and the length of its maximal unbordered factor. Guided by the experimental results, we state the following conjecture:

\begin{conjecture}
Expected length of the maximal unbordered factor of a string of length $n$ is $n - \Oh(1)$.
\end{conjecture}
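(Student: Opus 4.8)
The plan is to prove the equivalent statement that $\mathbb{E}[\,n - b(S)\,] = \Oh(1)$, where $S$ is uniform over strings of length $n$ over the alphabet of size $\sigma$. Since $n - b(S)$ is a non\-negative integer, $\mathbb{E}[\,n-b(S)\,] = \sum_{t\ge 1}\Pr[\,b(S)\le n-t\,]$, so it suffices to show that $\Pr[\,b(S)\le n-t\,]\le C\rho^{t}$ for constants $C$ and $\rho<1$ depending only on $\sigma$, uniformly in $n$. The starting point is the standard border/period correspondence: a string $w$ is unbordered iff $\pi(w)=|w|$, and a bordered $w$ has a shortest border of length at most $|w|/2$. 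In particular, $b(S)\le n-t$ forces every prefix $S[1..n-i]$ with $0\le i\le t-1$ to be bordered (each has length $>n-t\ge b(S)$), hence $\Pr[\,b(S)\le n-t\,]\le\Pr[\,S[1..n-i]\text{ is bordered for all }0\le i< t\,]$, and the problem reduces to bounding this last probability.

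To bound it I would split on $\pi(S)$. If $\pi(S)\le n-t$, then $S$ has a period $p\le n-t$; the probability of this is at most $\sum_{q\ge t}\sigma^{-q}=\sigma^{1-t}/(\sigma-1)$, which is already of the desired geometric form. So assume $\pi(S)>n-t$, and for each $i$ let $\beta_i\le (n-i)/2$ be the length of the shortest border of $S[1..n-i]$, so that $S[1..n-i]$ has the large period $p_i:=n-i-\beta_i\ge (n-i)/2$. The key structural claim I would establish is a \emph{cascade} property: whenever a period of one prefix survives into a shorter prefix, the Fine--Wilf periodicity theorem forces the shorter prefix to have an additional period equal to the relevant $\gcd$, and iterating this pushes the minimal period of $S$ down to at most $n-t$, contradicting the case assumption. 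Consequently, in the case $\pi(S)>n-t$ the border constraints coming from the $t$ prefixes cannot all ``reuse'' the same periodicity: revealing the characters of $S$ from right to left, a constant fraction of these $t$ prefixes must each impose a genuinely new equality constraint, each satisfied with probability bounded away from $1$ (for $\sigma\ge 5$ one can aim for a bound around $1-1/\sigma$; for smaller $\sigma$ one would instead work with the known asymptotic density $\mu_\sigma$ of unbordered words). This yields $\Pr[\,\text{all }t\text{ prefixes bordered},\ \pi(S)>n-t\,]\le C'(\rho')^{t}$, and combining the two cases gives the required geometric bound and hence the $\Oh(1)$ expectation.

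The main obstacle is precisely the cascade step. The events ``$S[1..n-i]$ is bordered'' for nearby indices $i$ are heavily dependent -- for close $i,j$ the shortest borders of $S[1..n-i]$ and $S[1..n-j]$ nearly coincide -- so probabilities cannot be multiplied naively, and a crude union bound over all tuples $(\beta_0,\dots,\beta_{t-1})$ already collapses once $t=\omega(n/\log n)$. Turning the informal dichotomy ``either the borders collapse into a short period of $S$, which is exponentially unlikely, or each successive prefix costs an extra essentially independent factor'' into a rigorous, quantitative statement, uniform in $n$ and $t$ and over which borders occur, is the delicate part. For $t\ge\tfrac{4}{7}n$ the difficulty disappears: then $b(S)\le n-t\le\tfrac37 n$, so by the theorem of Holub and Nowotka $b(S)=\pi(S)$, whence $\Pr[\,b(S)\le n-t\,]\le\Pr[\,\pi(S)\le\tfrac37 n\,]\le\sigma^{-\Omega(n)}$, contributing only $o(1)$ to the expectation; and for small $t$ the dichotomy above is cleanest. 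It is the intermediate range of $t$ -- where $b(S)$ is a constant fraction of $n$ but the $\tfrac37 n$ bound does not yet bite -- that is the genuine bottleneck, which is presumably why the unconditional conclusion reachable by this route is the weaker ``at least $0.99\,n$'' bound announced in the abstract rather than the full $n-\Oh(1)$.
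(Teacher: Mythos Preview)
The statement you are attempting to prove is a \emph{conjecture} in the paper, not a theorem; the paper explicitly leaves it open (see the Conclusion). There is therefore no proof in the paper to compare your attempt against. What the paper does prove is the weaker Theorem~\ref{th:difference}: the expected length of the maximal unbordered factor is at least $n\,(1-\xi(\sigma)\sigma^{-4})+\Oh(1)$.

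Your proposal is itself not a complete proof, as you acknowledge: the ``cascade'' step --- turning the informal dichotomy ``either the short borders collapse into a short period of $S$, or each successive bordered prefix imposes a fresh, essentially independent constraint'' into a rigorous geometric bound $\Pr[b(S)\le n-t]\le C\rho^t$ uniform over the intermediate range of $t$ --- is left as an obstacle. That is exactly where the argument stalls, and it is a genuine gap, not a routine technicality: the Fine--Wilf argument you sketch does not by itself rule out long stretches of consecutive bordered prefixes whose shortest borders drift by one at each step without ever forcing a globally short period. Note also that your closing sentence presumes the paper reaches its $0.99\,n$ bound by your probabilistic route; it does not. The paper's method is entirely different and constructive: for each unbordered string $S$ of length $i\ge\lceil n/2\rceil$ it generates many length-$n$ strings of the form $SP_1\cdots P_k$ (each $P_j$ a prefix of $S$) whose maximal unbordered factor has length at least $i$, shows these are pairwise distinct across different $S$, counts them explicitly via Corollary~\ref{cor:unbordered} and Lemma~\ref{lm:unbordered+}, and sums. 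No tail-probability analysis or Fine--Wilf cascade appears.
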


\begin{figure}
\vspace*{-5pt}
\centering
\includegraphics[scale=0.5]{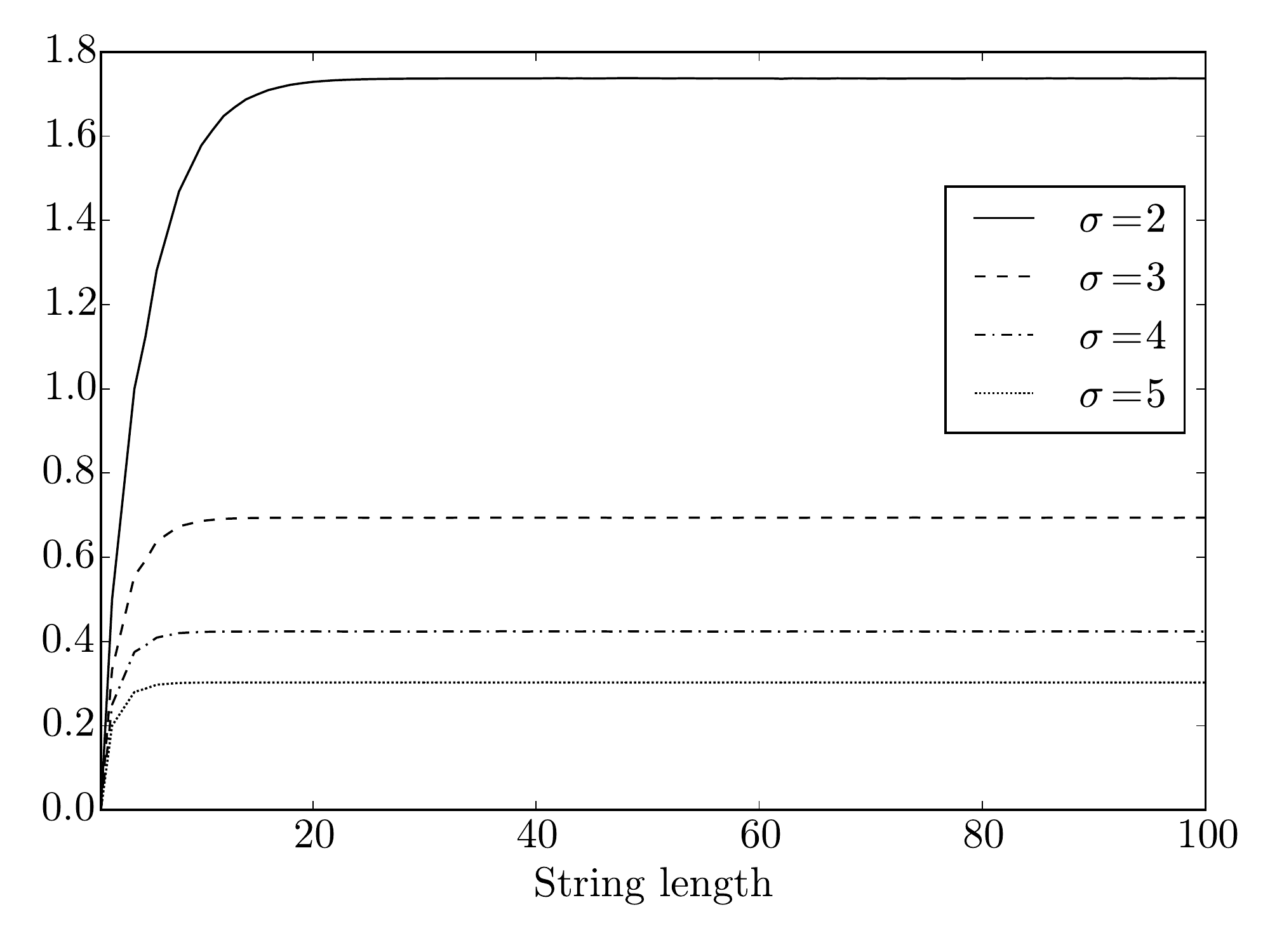}
\caption{Average difference between the length $n$ of a string and the length of its maximal unbordered factor for $1 \le n \le 100$ and alphabets of size $2 \le \sigma \le 5$.} 
\label{fig:max_unbordered_length}
\vspace*{-5pt}
\end{figure}

To the best of our knowledge, there have been no attempts to prove the conjecture or any lower bound at all in the literature. In Section~\ref{sec:lower_bound} we address this gap and make the very first step towards proving the conjecture. We show that the expected length of the maximal unbordered factor of a string of length $n$ over the alphabet $A$ of size $\sigma \ge 2$ is at least $n (1 - \xi(\sigma) \cdot \sigma^{-4}) + \Oh(1)$, where $\xi(\sigma)$ is a function that converges to $2$ quickly with the growth of $\sigma$. In particular, this theorem implies that for alphabets of size $\sigma \ge 5$ the expected length of the maximal unbordered factor of a string is at least $0.99 n$  (for sufficiently large values of $n$). To prove the theorem we developed a method of generating strings with large unbordered factors which we find to be interesting on its own (see Section~\ref{sec:generate}).

It follows that the algorithm for computing maximal unbordered factors we sketched earlier cannot be used in a majority of cases. Instead, one can consider the following algorithm. A border array of a string is an array containing the maximal length of a border of each prefix of this string. Note that a prefix of a string is unbordered exactly when the corresponding entry in the border array is zero. Therefore, to compute the maximal unbordered factor of a string $S$ it suffices to build border arrays of all suffixes of a string. It is well-known that a single border array can be constructed in linear time, which gives quadratic time bound for the algorithm. In  Section~\ref{sec:algo} we show how to modify this algorithm to make use of the fact that the expected length of the maximal unbordered factor is big. We give $\Oh(\frac{n^2}{\sigma^4})$ time bound for the modified algorithm, as well as confirm its efficiency experimentally.

\paragraph{Related work.} 
Apart from the aforementioned results, we consider our work to be related to three areas of research. 

As we have already mentioned, the maximal unbordered factor can be found by locating the rightmost zeros in the border arrays of suffixes of a string and better understanding of structure of border arrays would give more efficient algorithms for the problem. Structure of border arrays has been studied in~\cite{ValidatingKMP,ValidatingKMP-1,ValidatingKMP-2,ValidatingKMP-3,ValidatingKMP-4,NumberOfBorderArrays}. 

In contrast to the problem we consider in this work, one can be interested in the problem of preprocessing a string to answer online factor queries related to its borders. This problem has been considered by Kociumaka et al.~\cite{InternalPM,InternalPM-old}. They proposed a series of data structures which, in particular, can be used to determine if a factor is unbordered in logarithmic time.

Finally, repeating fragments in a string (borders of factors is one example of such fragments) were studied in connection with the \emph{Longest Common Extension} problem which asks, given a pair of positions $i,j$ in a string, to return the longest fragment that occurs both at $i$ and $j$. This problem has many solutions, yet recently Ilie at al.~\cite{LCE} showed that the simplest solution, i.e. simply scanning the string and comparing pairs of letters starting at positions $i$ and~$j$, is the fastest on average. The authors also proved that the longest common extension has expected length smaller than $\frac{1}{\sigma-1}$, where $\sigma$ is the size of the alphabet.

\section{Preliminaries}
\label{sec:prelim}
We start by introducing some standard notation and definitions. 

\paragraph{Power sums.} We will need the following identities.

\begin{fact}
\label{lm:power_sum}
$S(x) = \sum_{i=1}^{k}  i \; x^{i-1} = \frac{k \; x^{k+1} - (k+1) \; x^{k} + 1}{(x-1)^2}$ for all $x \neq 1$.
\end{fact}
\begin{proof}
$$S(x) = \bigl( \sum_{i=1}^{k} x^i \bigr) ' = \bigl( \frac{x^{k+1} - x}{x-1} \bigr) ' =  \frac{((k+1) x^{k} - 1)(x-1) - (x^{k+1} - x)}{(x-1)^2}$$
Simplifying, we obtain
$$S(x) = \sum_{i=1}^{k}  i \; x^{i-1} = \frac{k \; x^{k+1} - (k+1) \; x^{k} + 1}{(x-1)^2}$$
\qed
\end{proof}

\begin{corollary}
\label{cor:power_sum}
$S(x) = \sum_{i=1}^{k}  i \; x^{i-1} = \frac{k \; x^k}{x-1} + \Oh(x^{k-2})$ for $x \ge 1.5$.
\end{corollary}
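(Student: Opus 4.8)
The plan is to obtain the estimate directly from the closed form of \autoref{lm:power_sum} by isolating the dominant term and bounding the remainder. First I would put the claimed main term over the same denominator as the exact expression: since $\frac{k\,x^k}{x-1} = \frac{k\,x^{k+1} - k\,x^k}{(x-1)^2}$ and, by \autoref{lm:power_sum}, $S(x) = \frac{k\,x^{k+1} - (k+1)\,x^k + 1}{(x-1)^2}$, subtracting the former from the latter makes the $k\,x^{k+1}$ terms cancel and collapses $-(k+1)x^k + k\,x^k$ to $-x^k$, so that
$$S(x) - \frac{k\,x^k}{x-1} = \frac{1 - x^k}{(x-1)^2}.$$

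Next I would check that this difference is $\Oh(x^{k-2})$, uniformly for $x \ge 1.5$. For $x \ge 1$ we have $x^k \ge 1$, hence $|1 - x^k| \le x^k$, and therefore
$$\left| S(x) - \frac{k\,x^k}{x-1} \right| \;\le\; \frac{x^k}{(x-1)^2} \;=\; \left( \frac{x}{x-1} \right)^2 x^{k-2}.$$
Because $t \mapsto \frac{t}{t-1}$ is decreasing on $(1,\infty)$, for $x \ge 1.5$ we get $\frac{x}{x-1} \le 3$, so the difference is at most $9\,x^{k-2}$ in absolute value, which is precisely the asserted bound.

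The only place that warrants a moment's care is this last step: the constant absorbed into $\Oh(\cdot)$ must be independent of $x$, since the statement quantifies over all $x \ge 1.5$, and this is exactly what the threshold $1.5$ provides — it keeps $\frac{x}{x-1}$ bounded (by $3$) and thus prevents the denominator $(x-1)^2$ from spoiling the estimate near $x = 1$. Beyond that observation the argument is a one-line algebraic identity followed by an elementary inequality, so I do not expect any genuine obstacle.
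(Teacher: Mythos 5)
Your proof is correct and follows the same route the paper intends: the corollary is stated as an immediate consequence of Fact~\ref{lm:power_sum}, and you simply make the subtraction explicit, obtaining the remainder $\frac{1-x^k}{(x-1)^2}$ and bounding it by $9\,x^{k-2}$ uniformly for $x \ge 1.5$. Your remark that the threshold $1.5$ is exactly what keeps the implied constant independent of $x$ (needed later when the corollary is applied with $x=\sigma$ and $x=\sigma/2$) is apt.
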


\paragraph{Strings.} The alphabet $A$ is a finite set of size $\sigma$. We refer to the elements of~$A$ as \emph{letters}. A \emph{string} over $A$ is a finite ordered sequence of letters (possibly empty). Letters in a string are numbered starting from~1, that is, a string $S$ of \emph{length} $n$ consists of letters $S[1], S[2], \ldots, S[n]$. The length~$n$ of $S$ is denoted by~$|S|$. A set of all strings of length $n$ is denoted $A^n$.

For $1 \le i \le j \le n$, $S[i..j]$ is a \emph{factor} of $S$ with endpoints $i$ and $j$. The factor $S[1..j]$ is called a \emph{prefix} of $S$, and the factor $S[i..n]$ is called a \emph{suffix} of $S$. A prefix (or a suffix) different from $S$ and the empty string is called \emph{proper}.

If a proper prefix of a string is simultaneously its suffix, then it is called a \emph{border}. For example, borders of a string $ababa$ are $a$ and $aba$. The \emph{maximal border} of a string is its longest border. For $S$ we define its \emph{border array} $B$ (also known as the \emph{failure function}) to contain the lengths of the maximal borders of all its prefixes, i.e. $B[i]$ is the length of the maximal border of $S[1..i]$, $i = 1..n$. The last entry in the border array, $B[n]$, contains the length of the maximal border of $S$. It is well-known that the border array and therefore the maximal border of $S$ can be found in $\Oh(n)$ time and space~\cite{BorderArray}. 

A period of $S$ is an integer~$\pi$ such that for all $i$, $1 \le i \le n-\pi$, $S[i] = S[i +\pi]$. The minimal period of a string has length $n-B[n]$, and hence can be computed in linear time as well. 

\paragraph{Unbordered strings.} A string is called \emph{unbordered} if it has no border. Let $b(i, \sigma)$ be the number of unbordered strings in $A^i$. Nielsen~\cite{Bifixnote} showed that unbordered strings can be constructed in a recursive manner, starting from unbordered strings of length $2$ and inserting new letters in the ``middle''. The following theorem is a corollary of the proposed construction method:

\addtocounter{theorem}{-1}
\begin{theorem}[\cite{Bifixnote}]
\label{th:unbordered}
The sequence $\Big\{ \frac{b(i, \sigma)}{\sigma^i} \Big\}_{i = 1}^{\infty}$ is monotonically nonincreasing and it converges to a constant $\alpha$, which satisfies $\alpha \ge 1 - \sigma^{-1} - \sigma^{-2}$.
\end{theorem}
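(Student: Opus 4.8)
The plan is to reduce the theorem to the combinatorial recurrence underlying Nielsen's construction, and then extract all three claims (monotonicity, convergence, and the lower bound on the limit) by elementary manipulations of that recurrence. Concretely, the ``insert a letter in the middle'' construction yields, for every $n \ge 1$,
\[
b(2n+1,\sigma) = \sigma\, b(2n,\sigma), \qquad b(2n,\sigma) = \sigma\, b(2n-1,\sigma) - b(n,\sigma),
\]
together with the base value $b(1,\sigma)=\sigma$. The first identity says that deleting the central letter is a $\sigma$-to-$1$ surjection from unbordered strings of odd length $2n+1$ onto unbordered strings of length $2n$; the second says that inserting an arbitrary letter just left of the centre of an unbordered string of length $2n-1$ produces an unbordered string of length $2n$ in all but $b(n,\sigma)$ of the $\sigma\, b(2n-1,\sigma)$ cases, the exceptional pairs being in bijection with unbordered strings of length $n$ via $u = PPc \mapsto Pc$.

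Next I would pass to the normalized quantities $r_i = b(i,\sigma)/\sigma^{i}$. Dividing the two recurrences by $\sigma^{2n+1}$ and $\sigma^{2n}$ respectively gives, for all $n \ge 1$,
\[
r_{2n+1} = r_{2n}, \qquad r_{2n} = r_{2n-1} - \sigma^{-n} r_n .
\]
Since every $r_i \ge 0$, the second relation gives $r_{2n} \le r_{2n-1}$ while the first gives $r_{2n+1} = r_{2n}$; chaining these yields $1 = r_1 \ge r_2 = r_3 \ge r_4 = r_5 \ge \cdots$, so $\{r_i\}$ is nonincreasing. Being nonincreasing and bounded below by $0$, it converges to some $\alpha \ge 0$.

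Finally I would bound $\alpha$ from below by telescoping. Only the even steps contribute, and $r_{2n-1} - r_{2n} = \sigma^{-n} r_n$, so
\[
1 - \alpha = r_1 - \lim_i r_i = \sum_{n=1}^{\infty} (r_{2n-1} - r_{2n}) = \sum_{n=1}^{\infty} \sigma^{-n} r_n = \sum_{n=1}^{\infty} \frac{b(n,\sigma)}{\sigma^{2n}} .
\]
Here the naive bound $r_n \le 1$ is too weak, so I would instead use $r_1 = 1$ together with the elementary observation that an unbordered string of length $n \ge 2$ has distinct first and last letters, whence $b(n,\sigma) \le \sigma^{n-1}(\sigma-1)$ and $r_n \le 1 - \sigma^{-1}$ for $n \ge 2$. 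Substituting,
\[
1 - \alpha \le \sigma^{-1} + (1-\sigma^{-1})\sum_{n=2}^{\infty}\sigma^{-n} = \sigma^{-1} + \sigma^{-2},
\]
which is exactly $\alpha \ge 1 - \sigma^{-1} - \sigma^{-2}$.

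The main obstacle is the first step: establishing the two recurrences requires the word-combinatorics of Nielsen's construction, in particular verifying that a shortest border of the enlarged string cannot be short (such a border would descend to a border of the original string), so that the only possible obstruction is a border of length exactly half, and then identifying those ``half-bordered'' cases with unbordered strings of length $n$. Everything afterwards is the routine normalization-and-telescoping computation sketched above; if one simply cites the recurrences from \cite{Bifixnote}, the remainder of the proof is short, the only subtlety being the need for the sharper estimate $r_n \le 1 - \sigma^{-1}$ in place of $r_n \le 1$.
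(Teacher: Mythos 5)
This statement is quoted in the paper from Nielsen's work \cite{Bifixnote} and no proof is given there, so there is nothing to match your argument against line by line; judged on its own, your proof is correct and is essentially a reconstruction of Nielsen's argument. The two recurrences $b(2n+1,\sigma)=\sigma\,b(2n,\sigma)$ and $b(2n,\sigma)=\sigma\,b(2n-1,\sigma)-b(n,\sigma)$ are indeed the ones underlying the ``insert a letter in the middle'' construction, and your sketch of why they hold is sound: a shortest border always has length at most half the string length, a border of length strictly less than half avoids the inserted position and descends to a border of the original string, and in the even case the only remaining obstruction is a square $v=PP$, whose exceptional sources $u=QQc$ (with the unique bad insertion letter $c$) are in bijection with unbordered strings of length $n$ via $QQc\mapsto Qc$ (one checks easily that $QQc$ is unbordered iff $Qc$ is, again via the shortest-border bound). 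The normalization $r_i=b(i,\sigma)/\sigma^i$, the resulting monotonicity $r_{2n+1}=r_{2n}\le r_{2n-1}$, and the telescoping identity $1-\alpha=\sum_{n\ge1}\sigma^{-n}r_n$ are all correct, as is your observation that the naive bound $r_n\le1$ is insufficient while $r_n\le1-\sigma^{-1}$ for $n\ge2$ (which also follows directly from monotonicity, since $r_2=1-\sigma^{-1}$) yields exactly $1-\alpha\le\sigma^{-1}+\sigma^{-2}$. One small remark: for the theorem you do not even need the recurrences as exact equalities — the inequalities $b(2n+1)\le\sigma b(2n)$, $b(2n)\le\sigma b(2n-1)$ give monotonicity, and the opposite-direction bounds give the telescoped upper bound on $1-\alpha$ — but since the equalities hold, your argument stands as written.
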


\begin{corollary}[\cite{Bifixnote}]
\label{cor:unbordered}
$b (i, \sigma) \ge \sigma^i - \sigma^{i-1} - \sigma^{i-2}$ for all $i$.
\end{corollary}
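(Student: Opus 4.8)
The plan is to read this off directly from Theorem~\ref{th:unbordered}, which does all the real work. That theorem asserts two things about the sequence $a_i := b(i,\sigma)/\sigma^i$: it is monotonically nonincreasing, and it converges to a limit $\alpha$ with $\alpha \ge 1 - \sigma^{-1} - \sigma^{-2}$. The corollary is simply the pointwise consequence of these two facts together with a one-line elementary observation about real sequences.

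First I would invoke the standard fact that a monotonically nonincreasing sequence which converges has its limit as a \emph{lower} bound for every one of its terms; equivalently, the limit of such a sequence equals its infimum. Hence $a_i \ge \alpha$ for all $i \ge 1$. Combining this with the bound $\alpha \ge 1 - \sigma^{-1} - \sigma^{-2}$ from Theorem~\ref{th:unbordered} gives
$$\frac{b(i,\sigma)}{\sigma^i} \;=\; a_i \;\ge\; \alpha \;\ge\; 1 - \sigma^{-1} - \sigma^{-2}$$
for every $i$.

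Finally I would multiply both sides by $\sigma^i > 0$, which preserves the inequality, to obtain
$$b(i,\sigma) \;\ge\; \sigma^i - \sigma^{i-1} - \sigma^{i-2},$$
as claimed. There is no real obstacle here: the only non-formal point is the observation that a nonincreasing convergent sequence is bounded below by its limit, and even that is immediate since if some term were strictly below $\alpha$, monotonicity would force all later terms below $\alpha$ as well, contradicting convergence to $\alpha$. The substance of the statement lives entirely in Theorem~\ref{th:unbordered}; the corollary is just its restatement after clearing the denominator.
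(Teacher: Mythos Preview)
Your argument is correct and is exactly the intended derivation: the paper does not spell out a proof but simply records the statement as a corollary of Theorem~\ref{th:unbordered}, and the one-line observation that a nonincreasing convergent sequence lies above its limit is precisely what turns that theorem into the stated inequality after multiplying through by $\sigma^i$.
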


This corollary immediately implies that the expected length of the maximal unbordered factor of a string of length $n$ is at least $n (1 - \sigma^{-1}-\sigma^{-2})$. We improve this lower bound in the subsequent sections. We will make use of a lower bound on the number $b_j(i, \sigma)$ of unbordered strings such that its first letter differs from the subsequent $j$ letters. An example of such string for $j = 2$ is $a b c a c b b$.

\begin{lemma}
\label{lm:unbordered+}
$b_j (i, \sigma) \ge (\sigma-1)^{j+1}\sigma^{i-j-1} - \sigma^{i-2}$ for all $i \ge j+1$.
\end{lemma}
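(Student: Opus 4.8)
The goal is to bound $b_j(i,\sigma)$ from below: the number of unbordered strings of length $i$ whose first letter differs from each of the next $j$ letters. The key identity I'd use is the analogue of the bound behind Corollary \ref{cor:unbordered}: among all strings in $A^i$, the bordered ones can be charged to those having a border of length $1$ or length $\ge 2$, and a counting of bordered strings gives roughly $\sigma^{i-1} + \sigma^{i-2}$ bordered strings total (coming from the $\sigma^{i-1}$ strings with $S[1]=S[i]$ plus at most $\sigma^{i-2}$ more with a longer border — this is exactly the argument giving $b(i,\sigma) \ge \sigma^i - \sigma^{i-1} - \sigma^{i-2}$). I want to intersect the unbordered condition with the "prefix-distinctness" condition $S[1] \notin \{S[2],\dots,S[j+1]\}$.

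The plan is to use inclusion–exclusion in the form: (number of strings that are unbordered AND prefix-distinct) $\ge$ (number that are prefix-distinct) $-$ (number that are prefix-distinct but bordered). First, the number of prefix-distinct strings in $A^i$ is exactly $(\sigma-1)^j \sigma^{i-j-1} \cdot \sigma = (\sigma-1)^{j+1}\sigma^{i-j-1}$? Let me recount: $S[1]$ is free ($\sigma$ choices), then $S[2],\dots,S[j+1]$ must avoid $S[1]$ (that is $(\sigma-1)$ choices each, so $(\sigma-1)^j$), and the remaining $i-j-1$ letters are free ($\sigma^{i-j-1}$). So the count is $\sigma \cdot (\sigma-1)^j \cdot \sigma^{i-j-1} = (\sigma-1)^j \sigma^{i-j}$. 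Hmm — that gives $(\sigma-1)^j\sigma^{i-j}$, not $(\sigma-1)^{j+1}\sigma^{i-j-1}$. The discrepancy is a factor of $\sigma/(\sigma-1)$, so the lemma's leading term is actually slightly smaller than the count of all prefix-distinct strings, which is reassuring: there should be extra room. The cleanest route: a prefix-distinct string that is bordered must have $S[1]=S[i]$, and since $S[1]$ also differs from $S[2],\dots,S[j+1]$, the last letter $S[i]$ is constrained; one then bounds the bordered-and-prefix-distinct strings by something of order $\sigma^{i-2}$ (or better). Subtracting from $(\sigma-1)^j\sigma^{i-j}$ and using $(\sigma-1)^j\sigma^{i-j} \ge (\sigma-1)^{j+1}\sigma^{i-j-1} + (\text{slack})$ should absorb the $\sigma^{i-2}$ error term.

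Concretely, the steps I would carry out are: (1) Count prefix-distinct strings exactly: $(\sigma-1)^j\sigma^{i-j}$. (2) Bound the number of prefix-distinct strings possessing a border of length $\ge 1$: split into border length exactly $1$ versus $\ge 2$. For border length $1$, $S[1]=S[i]$, and $S[i]$ is now the distinguished letter, forcing $S[2],\dots,S[j+1] \neq S[i]$; the count is $\sigma\,(\sigma-1)^j\,\sigma^{i-j-2}$ (fix $S[1]=S[i]$, choose the middle $j$ letters avoiding it, the remaining $i-j-2$ letters free). For border length $\ge 2$, bound crudely by the total number of strings with a border of length $\ge 2$, namely $\le \sigma^{i-2}$. (3) Subtract: $b_j(i,\sigma) \ge (\sigma-1)^j\sigma^{i-j} - (\sigma-1)^j\sigma^{i-j-1} - \sigma^{i-2} = (\sigma-1)^{j+1}\sigma^{i-j-1} - \sigma^{i-2}$, which is exactly the claimed bound. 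The main obstacle — and the step needing care — is step (2): getting the "bordered and prefix-distinct" count to come out to precisely $(\sigma-1)^j\sigma^{i-j-1}$ for the length-$1$-border part (so the algebra closes with the stated constants rather than something weaker), and making sure the crude $\sigma^{i-2}$ bound for longer borders genuinely holds (this follows from the same reasoning that underlies Corollary \ref{cor:unbordered}, where the count of strings with a border of length $\ge 2$ is at most $\sigma^{i-2}$). The edge case $i = j+1$ should be checked directly, since then the "prefix-distinct" condition already uses up all but the first letter.
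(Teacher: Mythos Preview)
Your argument is correct in outline and reaches the stated bound, but step~(2) contains an imprecision worth fixing. The literal claim ``the total number of strings with a border of length $\ge 2$ is at most $\sigma^{i-2}$'' is false: for $\sigma=2$, $i=5$ there are $10$ such strings (for instance $ababa$ has a border of length $3$ but none of length $2$), while $\sigma^{i-2}=8$. What you actually need, and what Corollary~\ref{cor:unbordered} does give, is that the number of bordered strings with $S[1]\neq S[i]$ (equivalently, those whose \emph{shortest} border has length $\ge 2$) is at most $\sigma^{i-2}$: this count equals $(\sigma^i - b(i,\sigma)) - \sigma^{i-1}\le\sigma^{i-2}$. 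With that correction your split in~(2) is by shortest-border length, the upper bound $(\sigma-1)^j\sigma^{i-j-1}+\sigma^{i-2}$ for prefix-distinct bordered strings is valid, and the subtraction in~(3) goes through (also at $i=j+1$, where the first summand harmlessly overcounts the empty set).

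Your route differs from the paper's. The paper writes $b_j(i,\sigma)=b(i,\sigma)-b_j^-(i,\sigma)$ and bounds $b_j^-(i,\sigma)$, the number of unbordered but \emph{not} prefix-distinct strings, from above by the number of non-prefix-distinct strings with $S[1]\neq S[i]$, namely $(\sigma-1)\sigma^{i-1}-(\sigma-1)^{j+1}\sigma^{i-j-1}$; then it applies Corollary~\ref{cor:unbordered} to $b(i,\sigma)$. You instead start from the prefix-distinct strings and subtract the bordered ones. Both decompositions compute the same intersection and feed on the same input (Corollary~\ref{cor:unbordered}), so neither is materially stronger; yours is arguably more transparent, since the leading term $(\sigma-1)^{j+1}\sigma^{i-j-1}$ drops out directly from the factorisation $(\sigma-1)^j\sigma^{i-j}-(\sigma-1)^j\sigma^{i-j-1}$, whereas in the paper it emerges only after a cancellation.
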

\begin{proof}
The number of such strings is equal to $b(i, \sigma)$ minus the number $b_j^- (i, \sigma)$ of unbordered strings of length $i$ that do not have the property. We estimate the latter from above by the number of such strings in the set of all strings with their first letter not equal to the last letter. Hence, $b_j^- (i, \sigma) \le (\sigma - 1) \sigma^{i-1} - (\sigma-1)^{j+1}\sigma^{i-j-1}$. Recall that $b(i, \sigma) \ge \sigma^i - \sigma^{i-1} - \sigma^{i-2}$ by Theorem~\ref{th:unbordered}. The claim follows.
\qed
\end{proof}

\begin{myremark}
The right-hand side of the inequality of Lemma~\ref{lm:unbordered+} is often negative for $\sigma = 2$. We will not use it for this case. 
\end{myremark}

The \emph{maximal unbordered factor} of a string (MUF) is naturally defined to be the longest factor of the string which is unbordered.

\section{Generating strings with large MUF}
\label{sec:generate}
In this section we explain how to generate strings of some fixed length $n$ with large maximal unbordered factors. To show the lower bounds we announced, we will need many of such strings. The idea is to generate them from unbordered strings.

Let $S$ be an unbordered string of length $i \ge \lceil\frac{n}{2}\rceil$. Consider a string $S P_1 \ldots P_k$ of length $n$, where $P_1, \ldots, P_k$ are prefixes of $S$. 
It is not difficult to see that the maximal unbordered factor of any string of this form has length at least~$i$. (Because $S$ is one of its unbordered factors.) The number of such strings that can be generated from $S$ is $2^{n-i-1}$, because each of them corresponds to a composition of $n-i$, i.e. representation of $n-i$ as a sum of a sequence of strictly positive integers. But, some of these strings can be equal. Consider, for example, an unbordered string $S = aaabab$. Then the two strings $aaababaaa$ ($S$ appended with its prefix $aaa$) and $aaababaaa$ ($S$ appended with its prefixes $a$ and $aa$) will be equal. However, we can show the following lemma.

\begin{lemma}
\label{lm:distinct}
Let $S_1 \neq S_2$ be two unbordered strings. Any two strings of the form above generated from $S_1$ and $S_2$ are distinct.
\end{lemma}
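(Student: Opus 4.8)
The plan is to argue by contradiction: suppose that a single string $T$ of length $n$ can be generated both from $S_1$ and from $S_2$, and derive from this a border of one of the two strings. Write $i=|S_1|$ and $j=|S_2|$. Since $T$ is generated from $S_1$, its prefix of length $i$ equals $S_1$, and likewise its prefix of length $j$ equals $S_2$; consequently the shorter of $S_1,S_2$ is a prefix of the longer. The lengths cannot be equal, since then $S_1=T[1..i]=T[1..j]=S_2$, contradicting $S_1\neq S_2$. So, swapping the roles of the two strings if necessary, I may assume $i<j$, and then $S_1$ is a non-empty proper prefix of $S_2=T[1..j]$ (non-empty because $|S_1|\ge 1$).

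The next step is to understand the part of $S_2$ that overhangs $S_1$, namely $T[i+1..j]$. Because $T=S_1P_1\cdots P_k$ with each $P_a$ a non-empty prefix of $S_1$, we have $T[i+1..n]=P_1\cdots P_k$, so $T[i+1..j]$ is a non-empty prefix of the concatenation $P_1\cdots P_k$ (non-empty since $j>i$, and a legitimate factor since $j=|S_2|\le n$). I would then invoke the elementary fact that truncating a concatenation of non-empty prefixes of $S_1$ again yields a concatenation $R_1\cdots R_t$ with $t\ge 1$ of non-empty prefixes of $S_1$: keep the whole blocks $P_1,\dots,P_{m-1}$ as long as they fit, and let the last block $R_t$ be the possibly truncated, still non-empty, initial segment of $P_m$ that makes up the total length $j-i$; being a prefix of $P_m$, it is a prefix of $S_1$.

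With this in hand, $S_2=T[1..j]=S_1\,R_1\cdots R_t$. The last block $R_t$ is a suffix of $S_2$, since it ends $S_2$; it is also a prefix of $S_2$, since it is a prefix of $S_1$ and $S_1$ is a prefix of $S_2$; and it is non-empty and proper, because $|R_t|\le j-i<j=|S_2|$. Hence $R_t$ is a border of $S_2$, contradicting the assumption that $S_2$ is unbordered, and the lemma follows. The only step that requires a moment's care is the middle one — making sure that after cutting the concatenation of prefixes of $S_1$ down to length $j-i$ the last surviving block is still a genuine prefix of $S_1$, which is precisely what makes $R_t$ simultaneously a prefix and a suffix of $S_2$; everything else is a direct unwinding of the definitions, and in particular the hypothesis $i\ge\lceil n/2\rceil$ from the construction is not needed here.
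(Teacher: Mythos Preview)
Your argument is correct and follows the same route as the paper: assume a common string $T$, rule out $|S_1|=|S_2|$, take $|S_1|<|S_2|$, observe that $S_2$ is $S_1$ followed by some non-empty prefixes of $S_1$, and conclude that the last such prefix is a border of $S_2$. The only difference is that you spell out the truncation step (why $T[i+1..j]$ decomposes as $R_1\cdots R_t$ with each $R_a$ a prefix of $S_1$), which the paper leaves implicit.
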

\begin{proof}
Suppose that the produced strings are equal. If $|S_1| = |S_2|$, we immediately obtain $S_1 = S_2$, a contradiction. Otherwise, w.l.o.g. assume $|S_1| < |S_2|$. Then $S_2$ is equal to a concatenation of $S_1$ and some of its prefixes. The last of these prefixes is simultaneously a suffix and a prefix of $S_2$, i.e. $S_2$ is not unbordered. A contradiction.
\qed
\end{proof}

Our idea is to produce as many strings of the form $S P_1 \ldots P_k$ as possible, taking extra care to ensure that all strings produced from a fixed string $S$ are distinct. From unbordered strings of length $i = n$ and $i = n-1$ we produce just one string of length $n$. (For $i = n$ it is the string itself and for $i = n-1$ it is the string appended with its first letter.) For unbordered strings of length $i \le n-2$ we propose a different method based on the lemma below.

\begin{lemma}
Each unbordered string $S$ of length $i$ such that its first letter differs from the subsequent $j$ letters, where $\lceil{n/2\rceil} \le i < n-j$, gives at least $2^j$ distinct strings of the form $S P_1 \ldots P_k$.
\end{lemma}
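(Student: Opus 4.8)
The plan is to single out $2^j$ specific strings of the form $SP_1\cdots P_k$ and prove they are pairwise distinct; the hypothesis that $S[1]$ differs from $S[2],\dots,S[j+1]$ is exactly what prevents the collisions illustrated before the statement. Write $m=n-i$ for the length to be appended. Since $\lceil n/2\rceil\le i$ we have $m\le i$, so every integer in $\{1,\dots,m\}$ is the length of a genuine prefix of $S$, and strings $SP_1\cdots P_k$ of length $n$ are in bijection with compositions $(c_1,\dots,c_k)$ of $m$ via $P_\ell=S[1..c_\ell]$. I would take $\mathcal C$ to be the set of those compositions in which the parts other than the last sum to at most $j$ --- equivalently, the last part is at least $m-j\ge 1$, using $i<n-j$. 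Grouping by the value $s=c_1+\dots+c_{k-1}\in\{0,\dots,j\}$, the number of such compositions is $1+\sum_{s=1}^{j}2^{s-1}=2^j$, so it suffices to show that $(c_1,\dots,c_k)\mapsto SP_1\cdots P_k$ is injective on $\mathcal C$.

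The engine of the argument is a read-off observation. Given a decomposition $SP_1\cdots P_k$, look at the appended block $T$ of length $m$ and call a position of $T$ a \emph{breakpoint} if a new prefix starts there. For every position $p\in\{1,\dots,j+1\}$ I claim $T[p]=S[1]$ if and only if $p$ is a breakpoint: if $p$ is not a breakpoint then it sits at some offset $t$ with $2\le t\le p\le j+1$ inside the prefix containing it, hence $T[p]=S[t]\ne S[1]$ by hypothesis; and a breakpoint obviously carries the letter $S[1]$. Consequently the set of breakpoints of $T$ lying in $\{1,\dots,j+1\}$ is a function of $T$ alone, independent of which decomposition produced it. Note that position $j+1$ is available precisely because $m\ge j+1$.

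To finish, observe that for a composition in $\mathcal C$ all breakpoints except possibly the largest one (which is $s+1\le j+1$) lie in $\{1,\dots,j\}$, and the largest breakpoint exceeds $j$ exactly when $s=j$. Hence $T$ determines, through the value of $T[j+1]$, whether $s=j$; and then the breakpoints of $T$ inside $\{1,\dots,j\}$ determine $c_1,\dots,c_{k-1}$ as their consecutive gaps, together with $s$ itself --- either directly, when $s\le j-1$ and every breakpoint is $\le j$, or via $s=j$ in the remaining case --- after which $c_k=m-s$ is forced. Thus two compositions in $\mathcal C$ that yield the same string coincide, giving $2^j$ distinct strings. The step requiring the most care is this last bookkeeping, in particular the boundary case $s=j$ where the final breakpoint falls at position $j+1$ rather than inside $\{1,\dots,j\}$; the counting of $\mathcal C$ and the verification that all parts are legal prefix lengths are routine.
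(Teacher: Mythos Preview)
Your argument is correct and follows essentially the same approach as the paper: both restrict to compositions of $n-i$ whose last part has length at least $(n-i)-j$, count these as $2^j$, and use that within the first $j{+}1$ positions of the appended block the letter $S[1]$ occurs exactly at the starts of the $P_\ell$'s. The only cosmetic difference is that the paper argues injectivity by taking the first index where two decompositions differ and deriving the contradiction $S[1]=S[|P_d|+1]$ with $|P_d|+1\le j+1$, whereas you reconstruct the entire breakpoint set from $T$; your final case split on $s=j$ is in fact unnecessary, since once you know all breakpoints lie in $\{1,\dots,j+1\}$ and that this set is read off from $T$, the composition is determined outright.
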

\begin{proof}
We choose the last prefix $P_k$ to be the prefix of $S$ of length at least $n-i-j$. We place no restrictions on the first $k-1$ prefixes. 

Let us start by showing that all generated strings are distinct. Suppose there are two equal strings $S P_1 \ldots P_\ell$ and $S P'_1 \ldots P'_{\ell'}$. Let $P_d, P'_d$ be the first pair of prefixes that have different lengths. W.l.o.g. assume that $|P_d| < |P'_d|$. Then $d \neq \ell$ and hence $|P_d| \le j = n - i - (n-i-j)$. It follows that $P'_d$ (which is a prefix of $S$) contains at least two occurrences of $S[1]$, one at the position $1$ and one at the position $|P_d|+1 \le j+1$. In other words, we have $S[1] = S[|P_d|+1]$ and $|P_d|+1 \le j+1$, which contradicts our choice of $S$.

If the length of the last prefix is fixed to some integer $m \ge n-i-j$, then each of the generated strings $S P_1 \ldots P_k$ is defined by the lengths of the first $k-1$ of the appended prefixes. In other words, there is one-to-one correspondence between the generated strings and compositions of $n-i-m$. (Here we use $i \ge \lceil{ n/2 \rceil}$ to ensure that every composition corresponds to a sequence of prefixes of~$S$.) The number of compositions of $n-i-m$ is $1$ when $m = n- i$ and $2^{n-i-m-1}$ otherwise. Summing up for all $m$ from $n-i-j$ to $n-i$ we obtain that the number of the generated strings is $2^j$.
\qed
\end{proof}

Let us estimate the total amount of strings produced by this method. We produce one string from each unbordered string of length $i$. Then, from each unbordered string of length $i$ such that its first letter differs from the second letter, we produce $1 = 2 - 1$ more string. If the first letter differs both from the second and the third letters, we produce $2 = 2^2 - 1 - 1$ more strings. And finally, if the first letter differs from the subsequent $j$ letters, we produce $2^{j-1} = 2^ j - \bigl(1 + 1 + 2 + \ldots + 2^{j-2}\bigr)$ strings. It follows that the number of strings we can produce from unbordered strings of length $i \le n-2$ is

\begin{equation*}
\label{eq:MUF=i}
b (i, \sigma) + \sum_{j=1}^{n-i-1} 2^{j-1} \cdot b_j (i, \sigma)
\end{equation*}
Recall that the maximal unbordered factor of each of the generated strings has length at least $i$ and that none of them can be equal to a string generated from an unbordered string of different length.

\section{Expected length of MUF}
\label{sec:lower_bound}
In this section we prove the main result of this paper. 

\begin{theorem}
\label{th:difference}
Expected length of the maximal unbordered factor of a string of length $n$ over an alphabet $A$ of size $\sigma \ge 2$ is at least 
\begin{equation}
n\cdot (1 - \xi(\sigma) \cdot \sigma^{-4}) + \Oh(1)
\end{equation}
where $\xi(2) = 8$ and $\xi(\sigma) = \frac{2\sigma^3 - 2\sigma^2}{(\sigma-2)(\sigma^2 - 2\sigma +2)}$ for $\sigma > 2$.
\end{theorem}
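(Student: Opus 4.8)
The plan is to lower-bound the expected MUF length by summing, over each length $i$ with $\lceil n/2\rceil \le i \le n$, the probability that a random string of length $n$ has MUF of length \emph{exactly} at least $i$, contributed through the generating construction of Section~\ref{sec:generate}. Concretely, write the expected MUF length as $\sum_{i} i \cdot \Pr[\mathrm{MUF} = i]$; rather than work with equality, I would telescope using $\mathbb{E}[\mathrm{MUF}] = \sum_{i\ge 1}\Pr[\mathrm{MUF}\ge i]$, or — more in the spirit of the construction — bound $\mathbb{E}[\mathrm{MUF}] \ge \sum_{i=\lceil n/2\rceil}^{n} i\cdot g(i)/\sigma^n$ where $g(i)$ is the number of strings produced from unbordered strings of length exactly $i$. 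By Lemma~\ref{lm:distinct} and the remark following it, the strings counted for distinct $i$ are disjoint, and each has MUF $\ge i$; assigning each such string its ``home'' length $i$ (the largest $i$ for which it arises in the construction suffices, or just the length it was generated from, since we only need a lower bound), we get $\mathbb{E}[\mathrm{MUF}] \ge \frac{1}{\sigma^n}\sum_{i} i\cdot g(i)$ where, for $i \le n-2$, $g(i) = b(i,\sigma) + \sum_{j=1}^{n-i-1} 2^{j-1} b_j(i,\sigma)$, and $g(n)=g(n-1)=1$ (negligible).

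Next I would substitute the lower bounds from Corollary~\ref{cor:unbordered} and Lemma~\ref{lm:unbordered+}: $b(i,\sigma)\ge \sigma^i - \sigma^{i-1}-\sigma^{i-2}$ and $b_j(i,\sigma)\ge (\sigma-1)^{j+1}\sigma^{i-j-1} - \sigma^{i-2}$. Then $g(i) \ge \sigma^i - \sigma^{i-1} - \sigma^{i-2} + \sum_{j=1}^{n-i-1} 2^{j-1}\bigl((\sigma-1)^{j+1}\sigma^{i-j-1} - \sigma^{i-2}\bigr)$. The key observation is that $2^{j-1}(\sigma-1)^{j+1}\sigma^{i-j-1} = \sigma^{i-1}(\sigma-1)\cdot\frac{1}{2}\bigl(\tfrac{2(\sigma-1)}{\sigma}\bigr)^{j}$, so for $\sigma\ge 3$ the ratio $\tfrac{2(\sigma-1)}{\sigma} = 2 - \tfrac{2}{\sigma} \ge 1$, giving a geometric series I can sum in closed form; the dominant $j$-term nearly cancels the $-\sigma^{i-1}$ from $b(i,\sigma)$. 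Meanwhile $\sum_j 2^{j-1}\sigma^{i-2}$ is $\Oh(2^{n-i}\sigma^{i-2})$, which is where the $\sigma^{-4}$-type loss will ultimately come from after dividing by $\sigma^n$ and summing over $i$. The case $\sigma=2$ must be handled separately (as the remark warns), using only $g(i)\ge b(i,2)\ge 2^i - 2^{i-1}-2^{i-2} = 2^{i-2}$, which already yields $\mathbb{E}[\mathrm{MUF}]\ge n(1 - \tfrac{1}{4}) + \Oh(1)$ — wait, that gives only $\tfrac{3}{4}n$, not matching $\xi(2)=8$, i.e. $n(1-8\cdot 2^{-4}) = n/2$; so for $\sigma=2$ the honest bound obtainable this way is weaker and one presumably keeps at least the $j=1$ term, $g(i)\ge b(i,2) + b_1(i,2)$, or argues more carefully — in any case $\xi(2)=8$ is the claimed (possibly loose) constant and the calculation should be arranged to hit it.

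After obtaining $g(i) \ge \sigma^i\bigl(1 - c(\sigma)\sigma^{-4}\bigr) - (\text{lower-order}) - \Oh(2^{n-i}\sigma^{i-2})$ for a constant $c(\sigma)$ that I expect to evaluate to exactly $\xi(\sigma)$, I would divide by $\sigma^n$ to get $\Pr[\text{string has home length }i] \ge \sigma^{i-n}(1 - \xi(\sigma)\sigma^{-4}) - \Oh((2/\sigma)^{n-i}\sigma^{-2})$. Then $\mathbb{E}[\mathrm{MUF}] \ge \sum_{i=\lceil n/2\rceil}^{n-2} i\cdot\bigl[\sigma^{i-n}(1-\xi\sigma^{-4}) - \Oh((2/\sigma)^{n-i}\sigma^{-2})\bigr]$. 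For the main term, $\sum_{i} i\,\sigma^{i-n} = \sum_{t\ge 0}(n-t)\sigma^{-t}$ over the relevant range, which by Corollary~\ref{cor:power_sum} (in the form $\sum t\,x^{t}$) equals $n\cdot\frac{1}{1-\sigma^{-1}} - \Oh(1) = n\cdot\frac{\sigma}{\sigma-1} - \Oh(1)$; but since total probability over all $i$ is $1$, the contributions from $i < \lceil n/2\rceil$ and the tail are $\Oh(1)$, so this cleanly gives $n(1-\xi(\sigma)\sigma^{-4}) + \Oh(1)$. The error term $\sum_i i(2/\sigma)^{n-i}\sigma^{-2}$ is $\Oh(n\sigma^{-2}\sum_t t(2/\sigma)^t)$ — for $\sigma\ge 3$ this is $\Oh(n\sigma^{-2})$, which is \emph{too big}: it is $\Theta(n\sigma^{-2})$, not $\Oh(1)$, so I must be more careful and show this loss is already absorbed into the $\xi(\sigma)\sigma^{-4}$ term rather than being an additive $\Oh(1)$.

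The main obstacle, therefore, is the bookkeeping of the error terms: both the $-\sigma^{i-2}$ contributions inside $g(i)$ and the geometric-series remainders must be tracked precisely enough that, after dividing by $\sigma^n$ and summing $i\cdot(\cdot)$ over $i$, everything either collapses into the single coefficient $\xi(\sigma)$ multiplying $n\sigma^{-4}$, or is genuinely $\Oh(1)$. In particular the quantity $\sum_{j=1}^{n-i-1}2^{j-1}\sigma^{i-2}$ summed over $i$, weighted by $i/\sigma^n$, contributes a $\Theta(n\sigma^{-4})$ term (since $\sum_i i\,2^{n-i}\sigma^{i-2}/\sigma^n = \sigma^{-2}\sum_i i(2/\sigma)^{n-i} = \Theta(n\sigma^{-2})$ — hmm, that is $\sigma^{-2}$ not $\sigma^{-4}$), so in fact I expect the real source of the $\sigma^{-4}$ is a near-cancellation in the leading terms, and identifying exactly which terms survive to order $n\sigma^{-4}$ and proving the rest is $\Oh(1)$ (not merely $o(n)$) is the delicate part. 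I would organize the proof as: (1) set up $\mathbb{E}[\mathrm{MUF}]\ge\frac{1}{\sigma^n}\sum_i i\,g(i)$; (2) plug in the two lower bounds and sum the geometric series in $j$ in closed form; (3) separate $g(i)/\sigma^i$ into $1 - (\text{error})$ and bound the error uniformly by $\xi(\sigma)\sigma^{-4}$ plus something summing to $\Oh(\sigma^n/n)\cdot\Oh(1)$; (4) apply Corollary~\ref{cor:power_sum} to evaluate $\sum_i i\sigma^{i-n}$ and conclude; (5) dispatch $\sigma=2$ by hand.
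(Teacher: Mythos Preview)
Your overall setup is the same as the paper's: lower-bound $\sum_i i\cdot\beta^n_i(\sigma)$ by $\sum_i i\cdot g(i)$ with $g(i)=b(i,\sigma)+\sum_{j=1}^{n-i-1}2^{j-1}b_j(i,\sigma)$, plug in Corollary~\ref{cor:unbordered} and Lemma~\ref{lm:unbordered+}, and evaluate. Two concrete things go wrong in your execution, though.

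First, $g(n)$ and $g(n-1)$ are not $1$; they are $b(n,\sigma)$ and $b(n-1,\sigma)$ (one generated string \emph{per} unbordered string), and these are anything but negligible: the term $n\cdot b(n,\sigma)$ alone is $\Theta(n\sigma^n)$ and is in fact the dominant contribution. Relatedly, for $\sigma=2$ your arithmetic slips: using only $g(i)\ge b(i,2)\ge 2^{i-2}$ gives $\sum_i i\cdot 2^{i-2}/2^n=\tfrac14\sum_{t\ge 0}(n-t)2^{-t}=\tfrac{n}{2}+\Oh(1)$, which \emph{does} match $n(1-\xi(2)\cdot 2^{-4})=n/2$. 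The paper handles $\sigma=2$ exactly this way, discarding $(S_2)$ entirely.

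Second, and more seriously, your step~(3) --- bounding $g(i)/\sigma^i\ge 1-\xi(\sigma)\sigma^{-4}$ uniformly in $i$ --- cannot work. For $i=n$ you only have $g(n)/\sigma^n\ge 1-\sigma^{-1}-\sigma^{-2}$, which is off by two orders of $\sigma$. The $\sigma^{-4}$ in the theorem does \emph{not} come from a pointwise bound on $g(i)/\sigma^i$; it comes from a cancellation that only appears after summing over $i$. Concretely, the $b(i,\sigma)$ part alone gives $(S_1)/\sigma^n\ge n\bigl(1-\tfrac{1}{\sigma(\sigma-1)}\bigr)+\Oh(1)$, a deficit of order $n\sigma^{-2}$; the positive part of the $b_j$ sum contributes $+\,n\cdot\tfrac{\sigma-1}{\sigma(\sigma^2-2\sigma+2)}+\Oh(1)$, and $\tfrac{\sigma-1}{\sigma(\sigma^2-2\sigma+2)}-\tfrac{1}{\sigma(\sigma-1)}=-\tfrac{1}{\sigma(\sigma-1)(\sigma^2-2\sigma+2)}$, now of order $\sigma^{-4}$. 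The negative part $(N)$ you correctly identified as $\Theta(n\sigma^{-2})$ in isolation is actually $\tfrac{n}{\sigma^2(\sigma-1)(\sigma-2)}+\Oh(1)$, also $\Theta(n\sigma^{-4})$ once the exact constant is kept. None of these pieces is individually $\Oh(n\sigma^{-4})$; you must carry the exact rational coefficients all the way through and combine them at the end.

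The paper makes this tractable by reversing your order of summation: it sums over $i$ first (the inner sum $\sum_i i\,\sigma^{i-1}$ is a single application of Corollary~\ref{cor:power_sum}), leaving an outer sum in $j$ with ratio $\tfrac{2(\sigma-1)}{\sigma^2}<1$, which is then split into a geometric piece $(R_1)$ and a power-sum piece $(R_2)=\Oh(\sigma^n)$. Your order (sum $j$ first) gives a geometric series with ratio $\tfrac{2(\sigma-1)}{\sigma}>1$, whose dominant term depends on $n-i$ and makes the subsequent $i$-sum messier; it can be pushed through, but the paper's order is cleaner and avoids the trap of trying to bound $g(i)/\sigma^i$ pointwise.
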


Before we give a proof of the theorem, let us say a few words about $\xi(\sigma)$. This function is monotonically decreasing for $\sigma \ge 2$ and quickly converges to $2$. We give the first four values for $\xi(\sigma)$ (rounded up to 3 s.f.) and $1 - \xi(\sigma) \cdot \sigma^{-4}$ (rounded down to 3 s.f.) in the table below.

\begin{table}
\centering
\begin{tabularx}{0.8\textwidth}{X|X|X|X|X}
 & $\sigma = 2$ &  $\sigma = 3$ & $\sigma = 4$ &  $\sigma = 5$ \\
 \hline
$\xi(\sigma)$ & 8.000 &  7.200 & 4.800 & 3.922\\
\hline
$1 - \xi(\sigma) \cdot \sigma^{-4}$ & 0.500 & 0.911 &  0.981 & 0.993
\end{tabularx}
\end{table}

\begin{corollary}
Expected length of the maximal unbordered factor of a string of length $n$ over the alphabet $A$ of size $\sigma \ge 5$ is at least $0.99 n$ (for sufficiently large values of $n$).
\end{corollary}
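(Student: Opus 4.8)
The plan is to derive this directly from \autoref{th:difference}. That theorem gives a lower bound of $n \cdot (1 - \xi(\sigma) \cdot \sigma^{-4}) + \Oh(1)$ on the expected MUF length, so it suffices to show that the coefficient $1 - \xi(\sigma) \cdot \sigma^{-4}$ is bounded below by something strictly larger than $0.99$ for every $\sigma \ge 5$, and then argue that the additive $\Oh(1)$ term does not spoil the bound once $n$ is large enough.

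First I would invoke the monotonicity of $\xi$: as remarked right after \autoref{th:difference}, the function $\xi(\sigma) = \frac{2\sigma^3 - 2\sigma^2}{(\sigma-2)(\sigma^2 - 2\sigma + 2)}$ is monotonically decreasing for $\sigma \ge 2$ and $\sigma^{-4}$ is decreasing as well, so $\xi(\sigma) \cdot \sigma^{-4} \le \xi(5) \cdot 5^{-4}$ for all $\sigma \ge 5$. A direct evaluation gives $\xi(5) = \frac{200}{51}$ and hence $\xi(5) \cdot 5^{-4} = \frac{200}{31875} < 0.0063$, so $1 - \xi(\sigma) \cdot \sigma^{-4} > 0.993$ uniformly over $\sigma \ge 5$ (consistent with the last column of the table).

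Combining this with \autoref{th:difference}, the expected MUF length of a string of length $n$ over an alphabet of size $\sigma \ge 5$ is at least $0.993\,n - C$ for some constant $C$ depending only on $\sigma$. Choosing $n$ large enough that $0.003\,n \ge C$ (i.e. $n \ge C/0.003$), we get $0.993\,n - C \ge 0.99\,n$, which is exactly the claim.

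There is no real obstacle here; the only points requiring a word of care are that the monotonicity of $\xi$ lets us reduce to the single worst case $\sigma = 5$, and that the ``for sufficiently large values of $n$'' qualifier is precisely what is needed to absorb the hidden constant in the $\Oh(1)$ term of \autoref{th:difference} into the gap between $0.993$ and $0.99$.
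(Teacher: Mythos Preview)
Your proposal is correct and matches the paper's reasoning: the paper does not spell out a separate proof of the corollary but simply states it after noting that $\xi$ is monotonically decreasing and tabulating $1-\xi(5)\cdot 5^{-4}\ge 0.993$, which is exactly the computation you carry out. Your explicit handling of the $\Oh(1)$ term via the gap $0.993-0.99$ is the intended (if unstated) way to justify the ``sufficiently large $n$'' qualifier.
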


\subsubsection*{Proof of Theorem~\ref{th:difference}.} Let $\beta^n_i(\sigma)$ be the number of strings in $A^n$ such that the length of their maximal unbordered factor is $i$. Expected length of the maximal unbordered factor is then equal to

\begin{equation*}
\frac{1}{\sigma^n} \sum_{i = 1}^{n} i \cdot \beta^n_i(\sigma)
\end{equation*}
For the sake of simplicity, we temporarily omit $\frac{1}{\sigma^n}$, and only in the very end we will add it back. Recall that in the previous section we showed how to generate a set of distinct strings of length $n$ with maximal unbordered factors of length at least $i$ which contains

\begin{equation*}
b (i, \sigma) + \sum_{j=1}^{n-i-1} 2^{j-1} \cdot b_j (i, \sigma)
\end{equation*}
strings for all $\lceil \frac{n}{2}\rceil \le i \le n-2$ and $b(i, \sigma)$ strings for $i = \{n-1, n\}$. Then

\begin{equation}
\label{eq:start}
\sum_{i = 1}^{n} i \cdot \beta^n_i(\sigma) \ge \underbrace{\sum_{i=\lceil{n/2\rceil}}^{n} i \cdot b (i, \sigma)}_{(S_1)} + \underbrace{\sum_{i=\lceil{n/2\rceil}}^{n-2} \sum_{j=1}^{n-i-1} 2^{j-1} \cdot i \cdot b_j (i, \sigma)}_{(S_2)}
\end{equation}

We start by computing $(S_1)$.  Applying Corollary~\ref{cor:unbordered} and replacing $b(i, \sigma)$ with $\frac{b (n, \sigma)}{\sigma^{n-i}}$ in $(S_1)$, we obtain:
\begin{equation*}
(S_1) \ge \sum_{i = \lceil\frac{n}{2}\rceil}^{n} i \; \frac{b (n, \sigma)}{\sigma^{n-i}} = \frac{b (n, \sigma)}{\sigma^{n-1}} \bigl( \sum_{i = \lceil\frac{n}{2}\rceil}^{n} i \; \sigma^{i-1} \bigr)
\end{equation*}
Note that the lower limit in inner sum of $(S_1)$ can be replaced by one because the correcting term is small:
\begin{equation*}
\frac{b (n, \sigma)}{\sigma^{n-1}} \sum_{i=1}^{\lceil{n/2\rceil}-1}  i \sigma^{i-1} \le \frac{n^2 \cdot b(n, \sigma)}{4\sigma^{n/2}} = \Oh(\sigma^n)
\end{equation*}
We finally use Corollary~\ref{cor:power_sum} for $x = \sigma$ and $k=n$ to compute the right-hand side of the inequality: 
\begin{equation}
\label{eq:unbordered_as_period_fin}
(S_1) \ge \frac{n\sigma}{\sigma-1} \cdot b (n, \sigma) + \Oh(\sigma^n)
\end{equation}

We note that for $\sigma = 2$ the right-hand side is at least $2n \cdot (2^n - 2^{n-1}-2^{n-2})+\Oh(2^n) = n \cdot 2^{n-1} + \Oh(2^n)$ by Corollary~\ref{cor:unbordered} and $(S_2) \ge 0$. Hence, $\sum_{i = 1}^{n} i \cdot \beta^n_i(2) \ge n \cdot 2^{n-1} + \Oh(2^n)$. Dividing both sides by $2^n$, we obtain the theorem. 

Below we assume $\sigma > 2$ and for these values of $\sigma$ give a better lower bound on $(S_2)$. Recall that $b_j(i, \sigma) \ge (\sigma-1)^{j+1}\sigma^{i-j-1} - \sigma^{i-2}$ (see Lemma~\ref{lm:unbordered+}). It follows that
\begin{equation*}
(S_2) \ge \sum_{i=\lceil{n/2\rceil}}^{n-2} \sum_{j=1}^{n-i-1} 2^{j-1} \cdot i \cdot \bigl( (\sigma-1)^{j+1}\sigma^{i-j-1} - \sigma^{i-2} \bigr)
\end{equation*}
Let us change the order of summation:
\begin{equation*}
(S_2) \ge \sum_{j=1}^{\lfloor{n/2\rfloor}-1} 2^{j-1} \cdot \bigl(  (\sigma-1)^{j+1}\sigma^{-j} - \sigma^{-1} \bigr) \sum_{i=\lceil{n/2\rceil}}^{n-j-1}  i \cdot \sigma^{i-1}
\end{equation*}
We can replace the lower limit in the inner sum of $(S_2)$ by one as it will only change the sum by $\Oh(\sigma^n)$.
After replacing the lower limit, we apply Corollary~\ref{cor:power_sum} to compute the inner sum:
\begin{equation*}
(S_2) \ge \sum_{j=1}^{\lfloor{n/2\rfloor}-1} 2^{j-1} \cdot \bigl(  (\sigma-1)^{j+1} \sigma^{-j} - \sigma^{-1} \bigr) \cdot (n-j-1) \frac{\sigma^{n-j-1}}{\sigma-1} + \Oh(\sigma^n)
\end{equation*}
We divide the sum above into positive and negative parts:
\begin{equation*}
\underbrace{\sum_{j=1}^{\lfloor{n/2\rfloor}-1} (n-j-1) \; 2^{j-1} (\sigma-1)^{j} \sigma^{n-2j-1}}_{(P)} - \underbrace{\sum_{j=1}^{\lfloor{n/2\rfloor}-1} (n-j-1) 2^{j-1} \frac{\sigma^{n-j-2}}{\sigma-1}}_{(N)}
\end{equation*}
We start by computing $(N)$. We again apply the trick with the lower limit and Fact~\ref{lm:power_sum}, and replace $(n-j-1)$ with $k$.
\begin{equation*}
(N) = \frac{2^{n-3}}{\sigma-1} \sum_{k = \lceil{\frac{n}{2}\rceil}}^{n-2} k \bigr(\frac{\sigma}{2}\bigl)^{k-1} = \frac{(n-2)\sigma^{n-2}}{(\sigma-1)(\sigma-2)} + \Oh(\sigma^n)
\end{equation*}
Computing $(P)$ is a bit more involved. We divide it into two parts:
\begin{equation*}
(P) = \underbrace{\frac{(n-1) \sigma^{n-1}}{2} \cdot \sum_{j=1}^{\lfloor{n/2\rfloor}-1} \bigl( \frac{2 (\sigma-1)}{\sigma^2} \bigr)^j}_{R_1} - \underbrace{\sigma^{n-1} \sum_{j=1}^{\lfloor{n/2\rfloor}-1} j \; 2^{j-1} (\sigma-1)^j \sigma^{-2j}}_{R_2}
\end{equation*}
$(R_1)$ is a sum of a geometric progression and it is equal to
\begin{equation*}
\frac{(n-1) \sigma^{n-1}}{2} \cdot \frac{\bigl( \frac{2 (\sigma-1)}{\sigma^2} \bigr)^{\lfloor{n/2\rfloor}} - \frac{2 (\sigma-1)}{\sigma^2}}{\frac{2 (\sigma-1)}{\sigma^2} - 1} = \frac{(n-1) \sigma^{n-1}}{2} \cdot \frac{2(\sigma-1)}{\sigma^2 - 2\sigma + 2} + \Oh(\sigma^n)
\end{equation*}

\begin{lemma}
\label{lm:R_2_is_small}
$(R_2) = \Oh(\sigma^n)$.
\end{lemma}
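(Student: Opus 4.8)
The plan is to pull the common geometric part out of the sum and observe that what remains is a series whose ratio is strictly below $1$, so that the whole sum is bounded by a constant independent of $n$; the factor $\sigma^{n-1}$ in front then gives exactly the claimed $\Oh(\sigma^n)$.

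Concretely, I would set $q=\frac{2(\sigma-1)}{\sigma^2}$ and note that $2^{j-1}(\sigma-1)^j\sigma^{-2j}=\tfrac12\,q^{\,j}$, so that
\begin{equation*}
(R_2)=\frac{\sigma^{n-1}}{2}\sum_{j=1}^{\lfloor n/2\rfloor-1} j\,q^{\,j}.
\end{equation*}
Next I would check $0<q<1$ for every $\sigma\ge 2$: indeed $\sigma^2-2(\sigma-1)=\sigma^2-2\sigma+2=(\sigma-1)^2+1>0$, hence $2(\sigma-1)<\sigma^2$. Since $q$ does not depend on $n$, the truncated sum is at most the value of the convergent series $\sum_{j\ge 1} j\,q^{\,j}=\frac{q}{(1-q)^2}$ (this is the $k\to\infty$ limit of $q\cdot S(q)$ from Fact~\ref{lm:power_sum}, using $q^{k}\to 0$ and $k\,q^{k}\to 0$). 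Therefore
\begin{equation*}
(R_2)\le\frac{\sigma^{n-1}}{2}\cdot\frac{q}{(1-q)^2}=\Oh(\sigma^{n-1})=\Oh(\sigma^{n}),
\end{equation*}
with a hidden constant depending only on $\sigma$.

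There is essentially no hard step here. The only point that needs care is to remember that the $\Oh$-notation is taken as $n\to\infty$ with $\sigma$ fixed, so that the $\sigma$-dependent constant $q/(1-q)^2$ is legitimately absorbed, and to be explicit that $q<1$ (which is easy to overlook because $q$ is not small for small alphabets, e.g. $q=\tfrac12$ at $\sigma=2$ and $q=\tfrac49$ at $\sigma=3$). Alternatively, one could avoid invoking the infinite series altogether and simply apply Fact~\ref{lm:power_sum} to the finite sum directly, but bounding by the convergent tail is cleaner and keeps the computation trivial.
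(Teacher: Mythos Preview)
Your argument is correct and follows essentially the same route as the paper: both rewrite the summand to isolate the ratio $q=\frac{2(\sigma-1)}{\sigma^2}<1$ and conclude that the inner sum is bounded independently of $n$. The only cosmetic difference is that the paper applies Fact~\ref{lm:power_sum} to the finite sum directly (the alternative you mention), whereas you bound it by the convergent infinite series $\sum_{j\ge1}jq^j$.
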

\begin{proof}
We start our proof by rewriting $(R_2)$:
\begin{equation*}
(R_2) = \sigma^{n-3}(\sigma-1) \cdot \sum_{j=1}^{\lfloor{n/2\rfloor}-1} j \; \bigl( \frac{2 (\sigma-1)}{\sigma^2})^{j-1}
\end{equation*}
We apply Fact~\ref{lm:power_sum} for $x = \frac{2 (\sigma-1)}{\sigma^2}$ and $k = \lfloor{n/2\rfloor}-1$ to compute the inner sum.
\begin{equation*}
(R_2) = \sigma^{n-3}(\sigma-1) \cdot \frac{(\lfloor{n/2\rfloor}-1) \cdot (\frac{2 (\sigma-1)}{\sigma^2})^{\lfloor{n/2\rfloor}} - \lfloor{n/2\rfloor} \cdot (\frac{2 (\sigma-1)}{\sigma^2})^{\lfloor{n/2\rfloor}-1} + 1}{(\frac{2 (\sigma-1)}{\sigma^2} - 1)^2}
\end{equation*}
The claim follows.
\qed
\end{proof}

We now summarize our findings. From equations for $(P)$, $(N)$, $(R_1)$, and $(R_2)$ we obtain (after simplification):
\begin{equation}
\label{eq:unbordered+prefixes}
(S_2) \ge (P) - (N) = n \cdot \bigl( \frac{\sigma^n - \sigma^{n-1}}{\sigma^2 - 2\sigma + 2} - \frac{\sigma^{n-2}}{(\sigma-1)(\sigma-2)}\bigr) + \Oh(\sigma^n) 
\end{equation}

We now return back to Equation~\eqref{eq:start} and use our lower bounds for $(S_1)$ and $(S_2)$ together with Corollary~\ref{cor:unbordered} for $b(n, \sigma)$:
\begin{equation*}
\sum_{i = 1}^{n} i \cdot \beta^n_i(\sigma) \ge n \cdot \bigl( \frac{\sigma^{n+1} - \sigma^n - \sigma^{n-1}}{\sigma - 1} + \frac{\sigma^n - \sigma^{n-1}}{\sigma^2 - 2\sigma + 2} - \frac{\sigma^{n-2}}{(\sigma-1)(\sigma-2)} \bigr) + \Oh(\sigma^n) 
\end{equation*}

We now simplify the expression above and return back $\frac{1}{\sigma^n}$ as we promised in the very beginning of the proof to obtain:
\begin{equation}
\frac{1}{\sigma^n} \sum_{i=1}^n i \cdot \beta^n_i (\sigma) \ge n\cdot (1 - \xi(\sigma) \cdot \sigma^{-4}) + \Oh(1)
\end{equation}
where $\xi(\sigma) = \frac{2\sigma^3 - 2\sigma^2}{(\sigma-2)(\sigma^2 - 2\sigma +2)}$. This completes the proof of Theorem~\ref{th:difference}.
\qed

\begin{myremark}
Theorem~\ref{th:difference} actually provides a lower bound on the expected length of the maximal unbordered prefix (rather than that of the maximal unbordered  factor), which suggests that this bound could be improved.
\end{myremark}

\section{Computing MUF} 
\label{sec:algo}
Based on our findings we propose an algorithm for computing the maximal unbordered factor of a string $S$ of length $n$ and give an upper bound on its expected running time. A basic algorithm would be to compute the border arrays (see Section~\ref{sec:prelim} for the definition) of all suffixes of $S$. The border arrays contain the lengths of the maximal borders of all prefixes of all suffixes of $S$, i.e., of all factors of $S$. It remains to scan the border arrays and to select the longest factor such that the length of its maximal border is zero. Since a border array can be computed in linear time, the running time of this algorithm is $\Oh(n^2)$.

The algorithm we propose is a minor modification of the basic algorithm. We build border arrays for suffixes of $S$ starting from the longest one. After building an array $B_i$ for $S[i..n]$ we scan it and locate the longest factor $S[i..j]$ such that the length of its maximal border stored in $B_i [j]$ is zero. We then compare $S[i..j]$ and the current maximal unbordered factor (initialized with an empty string). If $S[i..j]$ is longer, we update the maximal unbordered factor and proceed. At the moment we reach a suffix shorter than the current maximal unbordered factor, we stop. 

\begin{theorem}
\label{th:avg_time_naive}
The maximal unbordered factor of a string of length $n$ over an alphabet $A$ of size $\sigma$ can be found in $\Oh(\frac{n^2}{\sigma^4})$ expected time.
\end{theorem}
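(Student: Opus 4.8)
The plan is to analyze the modified algorithm described just above the statement and bound its expected running time by the expected work done before the stopping condition kicks in. Let $M = b(S)$ be the length of the maximal unbordered factor of the input string $S$. The algorithm processes suffixes $S[i..n]$ in decreasing order of length, i.e.\ for $i = 1, 2, \ldots$, and building the border array $B_i$ of $S[i..n]$ costs $\Oh(n-i+1)$ time. Crucially, the algorithm halts as soon as the current suffix is shorter than the best unbordered factor found so far; since after the very first iteration the best-so-far is already at least the length of the maximal unbordered \emph{prefix} of $S$ (which in particular is $\le M$, but more to the point the algorithm certainly never examines a suffix shorter than the true answer $M$), the number of iterations is at most $n - M + 1$. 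Hence the total running time is $\Oh\bigl(\sum_{i=1}^{n-M+1} (n-i+1)\bigr) = \Oh\bigl((n-M+1)\cdot n\bigr)$.

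Next I would take expectations over a uniformly random $S \in A^n$. By the previous display, the expected running time is $\Oh\bigl(n \cdot \mathbb{E}[\,n - b(S) + 1\,]\bigr) = \Oh\bigl(n \cdot (n - \mathbb{E}[b(S)] + 1)\bigr)$ by linearity of expectation. Now I invoke Theorem~\ref{th:difference}: the expected length of the maximal unbordered factor is at least $n(1 - \xi(\sigma)\sigma^{-4}) + \Oh(1)$, so $n - \mathbb{E}[b(S)] \le n\,\xi(\sigma)\sigma^{-4} + \Oh(1)$. Since $\xi(\sigma)$ is bounded (it converges to $2$ and is at most $8$), we get $n - \mathbb{E}[b(S)] + 1 = \Oh(n/\sigma^4)$, and therefore the expected running time is $\Oh\bigl(n \cdot n/\sigma^4\bigr) = \Oh(n^2/\sigma^4)$, as claimed.

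The one subtlety — and the step I expect to need the most care — is the deterministic bound on the number of iterations: I must argue that the algorithm really does stop after $\Oh(n - b(S))$ suffixes, not $\Theta(n)$ of them. This requires observing that once the algorithm has scanned $B_1$ it has found \emph{some} unbordered factor of length at least that of the longest unbordered prefix, and, more importantly, that the longest unbordered factor of $S$ occurs as a prefix of some suffix $S[i..n]$ with $i \le n - b(S) + 1$, so by the time the loop variable passes $n - b(S) + 1$ the best-so-far equals $b(S)$ and every subsequent suffix is too short to continue. A clean way to phrase it: the loop runs while $n - i + 1 \ge (\text{current best}) $, and the current best is nondecreasing and reaches $b(S)$ no later than iteration $i_0 := n - b(S) + 1$ where the maximal unbordered factor starts; for $i \le i_0$ the cost is $\Oh(n)$ each, giving $\Oh(i_0 \cdot n) = \Oh((n - b(S) + 1)\,n)$ total, and for $i > i_0$ the loop has already terminated. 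Everything else is routine: plug in Theorem~\ref{th:difference}, use that $\xi(\sigma) = \Oh(1)$, and collect the $\Oh$ terms.
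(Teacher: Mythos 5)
Your proposal is correct and follows essentially the same route as the paper: a deterministic bound of $\Oh((n-b(S)+1)\cdot n)$ on the running time, justified by the fact that the maximal unbordered factor is a prefix of one of the first $n-b(S)+1$ suffixes so the stopping condition triggers by then, followed by averaging over all strings of $A^n$ and invoking Theorem~\ref{th:difference} with $\xi(\sigma)=\Oh(1)$. Your treatment of the termination argument is in fact slightly more explicit than the paper's one-line justification, but the substance is identical.
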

\begin{proof}
Let $b(S)$ be the length of the maximal unbordered factor of $S$. Then the running time of the algorithm is $\Oh((n-b(S))\cdot n)$, because $b(S)$ will be a prefix of one of the first $n-b(S)+1$ suffixes of $S$ (starting from the longest one). Averaging this bound over all strings of length $n$, we obtain that the expected running time is

$$\Oh(\frac{1}{\sigma^n} \sum_{S\in A^n} (n-b(S)) \cdot n ) = \Oh(n \cdot ( \frac{1}{\sigma^n} \sum_{S\in A^n} (n-b(S))))$$
and $\frac{1}{\sigma^n} \sum_{S\in A^n} (n-b(S)) = \Oh(\frac{n}{\sigma^4})$ as it follows from Theorem~\ref{th:difference} and properties of $\xi(\sigma)$. 
\qed
\end{proof}

We performed a series of experiments to confirm that the expected running time of the proposed algorithm is much smaller than that of the basic algorithm. We compared the time required by the algorithms for strings of length $1 \le n \le 100$ over alphabets of size $\sigma=\{2, 3, 4, 5, 10\}$. The time required by the algorithms was computed as the average time on a set of size $N = 10^6$ of randomly generated strings of given length. The experiments were performed on a PC equipped with one 2.6 GHz Intel Core i5 processor. As it can be seen in Fig.~\ref{fig:running_time}, the minor modification we proposed decreases the expected running time dramatically. Obtained results were similar for all considered alphabet sizes. All source files, results, and plots can be found in a repository \texttt{\url{http://github.com/avlonger/unbordered}}. 

\begin{figure}
\centering
\includegraphics[scale=0.4]{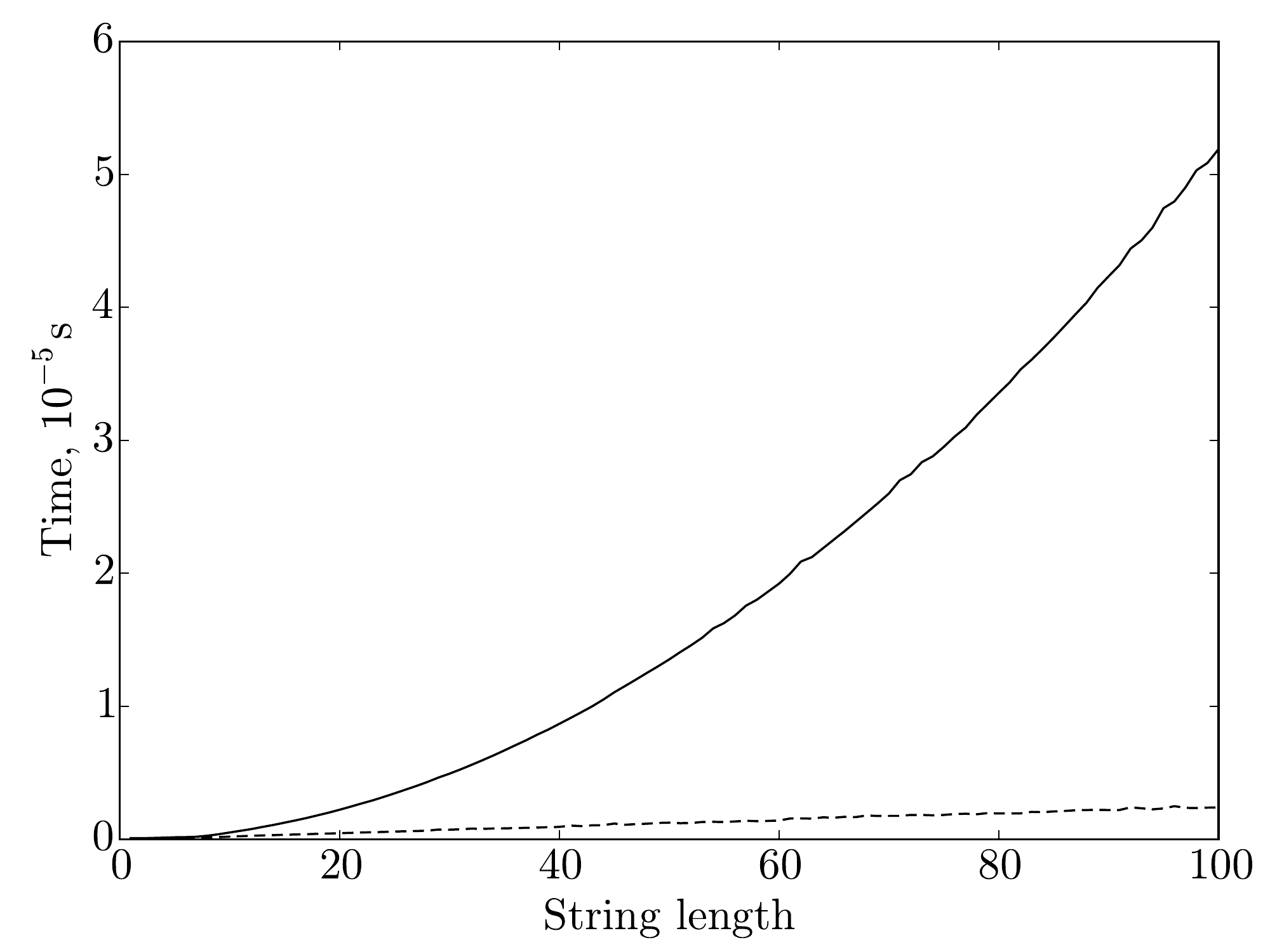}
\caption{Average running times of the proposed algorithm (dashed line) and the basic algorithm (solid line) for strings over the alphabet of size $\sigma = 2$.}
\label{fig:running_time}
\end{figure}

We note that the data structures~\cite{InternalPM,InternalPM-old} can be used to compute the maximal unbordered factor in a straightforward way by querying all factors in order of decreasing length. This idea seems to be very promising since these data structures need to be built just once, for the string $S$ itself. However, the data structures are rather complex and both the theoretical bound for the expected running time, which is $\Oh(\frac{n^2}{\sigma^4} \log n)$, and our experiments show that this solution is slower than the one described above. 

\section{Conclusion}
We consider the contributions of this work to be three-fold. We started with an explicit method of generating strings with large unbordered factors. We then used it to show that the expected length of the maximal unbordered factor and the minimal period of a string of length~$n$ is $\Omega(n)$, leaving the question raised in Conjecture 1 open. As an immediate application of our result, we gave a new algorithm for computing maximal unbordered factors and proved its efficiency both theoretically and experimentally. 

\subsection*{Acknowledgements} The authors would like to thank the anonymous reviewers whose suggestions greatly improved the quality of this work. 

\bibliographystyle{plain}
\bibliography{main}

\begin{thebibliography}{10}

\bibitem{CounterExample}
R.~Assous and M.~Pouzet.
\newblock Une caract\'{e}risation des mots p\'{e}riodiques.
\newblock {\em Discrete Mathematics}, 25(1):1--5, 1979.

\bibitem{NumberOfBorderArrays}
J.~Cl\'{e}ment and L.~Giambruno.
\newblock On the number of prefix and border tables.
\newblock In {\em Proc. of the Eleventh Latin American Theoretical Informatics
  Symposium}, volume 8392 of {\em Lecture Notes in Computer Science}, pages
  442--453. Springer Berlin Heidelberg, 2014.

\bibitem{Duval}
J.-P. Duval.
\newblock Relationship between the period of a finite word and the length of
  its unbordered segments.
\newblock {\em Discrete Mathematics}, 40(1):31--44, 1982.

\bibitem{ValidatingKMP-3}
J.-P. Duval, T.~Lecroq, and A.~Lefebvre.
\newblock Border array on bounded alphabet.
\newblock {\em J. Autom. Lang. Comb.}, 10(1):51--60, January 2005.

\bibitem{ValidatingKMP-2}
J.-P. Duval, T.~Lecroq, and A.~Lefebvre.
\newblock {Efficient validation and construction of Knuth-Morris-Pratt arrays.}
\newblock In {\em Proc. of Conference in Honor of Donald E. Knuth}, 2007.

\bibitem{UnborderedConjugate}
J.-P. Duval, T.~Lecroq, and A.~Lefebvre.
\newblock Linear computation of unbordered conjugate on unordered alphabet.
\newblock {\em Theor. Comp. Sci.}, 522(0):77--84, 2014.

\bibitem{ESproblem}
A.~Ehrenfeucht and D.M. Silberger.
\newblock Periodicity and unbordered segments of words.
\newblock {\em Discrete Mathematics}, 26(2):101--109, 1979.

\bibitem{ValidatingKMP-1}
F.~Fran\u{e}k, S.~Gao, W.~Lu, P.~J. Ryan, W.~F. Smyth, Y.~Sun, and L.~Yang.
\newblock Verifying a border array in linear time.
\newblock {\em J. Comb. Math. Comb. Comput}, 42:223--236, 2000.

\bibitem{ValidatingKMP}
P.~Gawrychowski, A.~Je\.{z}, and \L{}. Je\.{z}.
\newblock {Validating the Knuth-Morris-Pratt Failure Function, Fast and
  Online}.
\newblock {\em Theor. Comp. Sys.}, 54(2):337--372, February 2014.

\bibitem{EhrenfeuchtSilberger-2}
\u{S}. Holub and D.~Nowotka.
\newblock {{The Ehrenfeucht--Silberger problem}}.
\newblock {\em Journal of Comb. Theor., Series A}, 119(3):668--682, 2012.

\bibitem{LCE}
L.~Ilie, G.~Navarro, and L.~Tinta.
\newblock The longest common extension problem revisited and applications to
  approximate string searching.
\newblock {\em Journal of Discrete Algorithms}, 8(4):418--428, 2010.

\bibitem{InternalPM-old}
T.~Kociumaka, J.~Radoszewski, W.~Rytter, and T.~Wale\'{n}.
\newblock Efficient data structures for the factor periodicity problem.
\newblock In {\em Proc. of the Nineteenth Symposium on String Processing and
  Information Retrieval}, volume 7608 of {\em Lecture Notes in Computer
  Science}, pages 284--294. Springer Berlin Heidelberg, 2012.

\bibitem{InternalPM}
T.~Kociumaka, J.~Radoszewski, W.~Rytter, and T.~Wale\'{n}.
\newblock Internal pattern matching queries in a text and applications.
\newblock In {\em Proc. of the Twenty-Sixth Annual ACM-SIAM Symposium on
  Discrete Algorithms}, pages 532--551, 2014.

\bibitem{ValidatingKMP-4}
D.~Moore, W.~F. Smyth, and D.~Miller.
\newblock Counting distinct strings.
\newblock {\em Algorithmica}, 23(1), 1999.

\bibitem{BorderArray}
J.H. Morris~Jr. and V.R. Pratt.
\newblock A linear pattern-matching algorithm, report 40.
\newblock Technical report, University of California, Berkeley, 1970.

\bibitem{Bifixnote}
P.~Nielsen.
\newblock A note on bifix-free sequences.
\newblock {\em IEEE Trans. Inf. Theor.}, 19(5):704--706, September 1973.

\end{thebibliography}
\end{document}